\documentclass[12pt]{amsart}

\usepackage{amsmath} 
\usepackage{amsfonts}
\usepackage{amssymb}
\usepackage{amsthm}
\usepackage{latexsym}
\usepackage{color}
\usepackage{enumerate}
\usepackage{mathrsfs}
\usepackage{setspace}
\usepackage{tikz}

\usepackage{amsaddr}

\setlength{\footskip}{30pt}
\setlength{\headsep}{20pt}
\setlength{\textwidth}{6.4in}
\setlength{\oddsidemargin}{0.04in}
\setlength{\evensidemargin}{0.04in}
\setlength{\textheight}{8.33in}
\setlength{\topmargin}{-.05in}

\newtheorem{lemma}{Lemma}[section]
\newtheorem{thm}{Theorem}[section]

\newtheorem{remark}{Remark}[section]
\numberwithin{equation}{section}

\newcommand{\veps}{\varepsilon}



\setstretch{1.13}


\def\R{\mathbb{R}}

\def\R{\mathbb{R}}

\def\S{\Sigma}
\def\({\left(}
\def\){\right)}

\def\={\stackrel{(n=2)}{=}}
\def\p{\partial}

\newcommand{\be}{\begin{equation}}
\newcommand{\ee}{\end{equation}}
\newcommand{\bee}{\begin{equation*}}
\newcommand{\eee}{\end{equation*}}

\newcommand{\m}{\mathfrak{m}}

\allowdisplaybreaks

\begin{document}
	
	\title[Penrose inequality with charged matter]{On the charged Riemannian Penrose inequality with charged matter}
	
	\author[McCormick]{Stephen McCormick}
	\address{Matematiska institutionen\\ Uppsala universitet\\ 751 06 Uppsala\\ Sweden}
	\email{stephen.mccormick@math.uu.se}
	
	\maketitle

\begin{abstract}
	Throughout the literature on the charged Riemannian Penrose inequality, it is generally assumed that there is no charged matter present; that is, the electric field is divergence-free. The aim of this article is to clarify when the charged Riemannian Penrose inequality holds in the presence of charged matter, and when it does not.
	
	First we revisit Jang's proof of the charged Riemannian Penrose inequality to show that under suitable conditions on the charged matter, this argument still carries though. In particular, a charged Riemannian Penrose inequality is obtained from this argument when charged matter is present provided that the charge density does not change sign. Moreover, we show that such hypotheses on the sign of the charge are in fact required by constructing counterexamples to the charged Riemannian Penrose inequality when these conditions are violated. We conclude by noting that one of these counterexamples contradicts a version of the charged Penrose inequality existing in the literature, and explain how this existing result can be repaired.
\end{abstract}


\section{Introduction}
The Penrose inequality is a conjectured inequality bounding the total (ADM) mass of an asymptotically flat spacetime in terms of the area of its outermost horizon. While the general conjecture remains open, it has been solved in the Riemannian case -- that is, when the spacetime admits a time-symmetric initial data slice -- independently by Huisken and Ilmanen \cite{H-I01} and Bray \cite{Bray01}. The former case used a weak formulation of inverse mean curvature flow, which has proven to be an indispensable tool in geometric analysis and mathematical general relativity, however this technique limits it to only consider the area of one connected component of the horizon and 3 spatial dimensions. The latter case, using a conformal flow of the whole 3-manifold allows for a disconnected horizon and the techniques were later extended up to 7 spatial dimensions by Bray and Lee \cite{Bray-Lee}. For the precise definitions of asymptotic flatness, ADM mass, and other relevant definitions used throughout, the reader is referred to Section \ref{SDefs}. There is a wealth of literature on the Penrose inequality so we only give a brief discussion of it here, however for a good background on the topic the reader is directed to the review articles \cite{MarsPenroseReview} by Mars and \cite{BC} by Bray and Chru\'sciel. For a more recent review on geometric inequalities in general relativity more generally, see the review article \cite{DGC-Review} by Dain and Gabach-Clement.

Penrose's original argument \cite{Penrose} behind this inequality is roughly the following. Given some initial data with ADM mass $\m_o$ and horizon area $A_o$, the total mass cannot increase (although it may decrease if gravitational radiation escapes to infinity) and by Hawking's area theorem, the horizon area, playing the role of entropy, cannot decrease. Therefore if we expect that the long-time evolution tends towards a Kerr solution, whose mass $\m$ and horizon area $A$ satisfy
\be \label{eq-RPI} 
	\m\geq \(\frac{A}{16\pi}\)^{\frac12},
\ee 
then this inequality should also hold for $\m_o$ and $A_o$.

In the case where gravity is coupled to electromagnetism, one also expects a version of this inequality accounting for charge, given by
\be \label{eq-chargedpenroseareaform}
	\m+\sqrt{\m^2-Q^2}\geq \(\frac{A}{4\pi}\)^{\frac12}
\ee 
where $Q$ denotes electric charge and $\m^2\geq Q^2$ assuming the charged version of the positive mass theorem holds \cite{CRT,Gibbons-Hull,GHHP}. In this case, Penrose's heuristic argument roughly becomes the following (see \cite{Jang79}). If the initial data ultimately tends towards a Kerr-Newman solution, which satisfies \eqref{eq-chargedpenroseareaform}, then provided that the charge is conserved, this inequality should hold. Indeed, it has been proven that \eqref{eq-chargedpenroseareaform} holds assuming the charge is entirely shielded by a horizon (minimal surface), that is, there is no charged matter outside of the horizon. From the mathematical point of view, this assumption is that the electric field is divergence-free. This was first proven by Jang \cite{Jang79} assuming what amounts to the existence of a smooth solution to inverse mean curvature flow. This assumption was later rendered superfluous by the development of weak inverse mean curvature flow by Huisken and Ilmanen \cite{H-I01}. This inequality has more recently been proven in the case of multiple black holes by Khuri, Weinstein and Yamada \cite{KWY-2017} under the assumption that a certain charge--area inequality is satisfied, and counterexamples are known in the case of multiple black holes that do not satisfy the charge--area inequality \cite{WY-2005}. In the case of a single (connected) horizon, Jang in fact proved something slightly stronger; he proved 
\be \label{eq-chargedpenrose}
\m\geq \(\frac{A}{16\pi}\)^{\frac12}\(1+\frac{4\pi Q^2}{ A}\),
\ee 
which also implies a lower bound on the area of the horizon. Throughout this article we will generally mean \eqref{eq-chargedpenrose} when we speak of the charged Riemannian Penrose inequality. Despite this success, very little has been said about the case where charged matter is present. In this note, we aim to clarify to what extent the charged Riemannian Penrose inequality holds in the presence of charged matter; that is, when the electric field is not divergence-free.

Note that in \eqref{eq-chargedpenrose} and \eqref{eq-chargedpenroseareaform}, there is some intentional ambiguity in regard to where $Q$ is measured. If there is no charged matter other than what may be shielded behind the horizon, then the charge of the horizon is the same as the charge at infinity, which simply follows from the divergence theorem. However, this is certainly not true in general if charged matter is present. Assume first that $Q>0$ refers to the charge measured on the horizon and there is charged matter exterior to the horizon. Following Penrose's original argument, if some negatively charged matter were to fall through the horizon then the charge of the horizon would decrease, in which case there is no reason to expect that just because \eqref{eq-chargedpenroseareaform} holds for Reissner--Nordstr\"om that it should hold generally. In fact, a counterexample is given in Section \ref{SChargedPenrose} by Theorem \ref{thm-counterexampleBH}, demonstrating that \eqref{eq-chargedpenroseareaform} fails to hold if the charge of the horizon and that of the matter have opposite sign. However, if we only permit positive charges outside of the horizon then Penrose's argument still holds. That is, we only require that the total charge cannot decrease. Indeed, \eqref{eq-chargedpenrose} follows from Jang's argument in this case and is given by Theorem \ref{thm-BHQ}.

On the other hand, if we take $Q=Q_\infty>0$ to be the total charge measured at infinity in \eqref{eq-chargedpenrose}, then charged matter falling into the horizon can not change the total charge but some charged matter might instead escape to infinity. In this case, following Penrose's argument, one may hope that \eqref{eq-chargedpenrose} still holds provided that there is no positively charged matter outside of the horizon. In this case any charge that escapes to infinity could at worst increase the total charge. Again, it is shown that the inequality holds in this case (Theorem \ref{thm-inftyQ}). Additionally, without restricting the sign of the charge density, a counterexample is given (Theorem \ref{thm-counterexampleinfty}). Of particular note, Theorem \ref{thm-counterexampleinfty} suggests that a correction should be made to a version of the charged Penrose inequality obtained by Khuri, Weinstein and Yamada \cite{KWY-exts-2015}, which we discuss in Section \ref{SSPMT}.

In Section \ref{SDefs}, we recall some standard background and definitions. The crux of this note is then contained in Section \ref{SChargedPenrose}, where we show the two versions of the charged Penrose inequality with charged matter and then construct our counterexamples mentioned above. Both versions of the charged Penrose inequalities follow directly from the argument of Jang \cite{Jang79}, with only minor modifications to track the charged matter, combined with the subsequent development of weak inverse mean curvature flow by Huisken and Ilmanen \cite{H-I01}. The counterexamples that we construct are spherically symmetric and are simply obtained by smoothly gluing Schwarzschild and Reissner--Nordstr\"om manifolds together. We conclude by discussing the relationship between these counterexamples and \cite{KWY-exts-2015}.

 For the sake of exposition, we assume the magnetic field vanishes throughout and discussion of magnetic fields is relegated to the appendix.

\section{Background and definitions} \label{SDefs}
In this section, we briefly recall some standard definitions. We say a Riemannian $3$-manifold $(M,g)$ is \emph{asymptotically flat} if there exists a compact set $K$ such that $M\setminus K$ is diffeomorphic to $\R^3$ minus a closed ball, and such that $g$ is asymptotic to a flat metric on $M\setminus K$ in the following sense: let $\delta$ denote the pullback of the Euclidean metric to $M\setminus K$ and in the Cartesian coordinates $x^i$ given by this diffeomorphism we ask that $g$ satisfies
\be 
	g-\delta=O(|x|^{-1})\qquad \p g=O(|x|^{-2}) \qquad \p^2 g=O(|x|^{-3})
\ee 
as $|x|\to\infty$. An asymptotically flat manifold $(M,g)$ can be seen as time-symmetric initial data in the context of general relativity, representing an isolated gravitating system. This system can be said to have a total mass given by the ADM mass \cite{ADM}, which is a geometric property of the manifold computed by its asymptotics. It is a well-defined geometric quantity \cite{Bartnik-86,Chrusciel-86}, however it is most conveniently expressed in terms of the aforementioned Cartesian coordinates near infinity as
\be \label{eq-ADM}
	\m_{ADM}(M,g)=\frac{1}{16\pi} \lim\limits_{R\to\infty}\int_{S_R} \(\p_i g_{ij}-\p_j g_{ii}\)dS^i,
\ee
where repeated indices are summed over and $S_R$ is a large coordinates sphere of radius $R$. In the context of time-symmetric initial data a \emph{horizon} is a closed minimal surface, and if a stable closed minimal surface exists then this implies the existence of an event horizon in the evolution. We say that a horizon, or indeed any closed surface $\S$ is \emph{outer minimising} if there are no surfaces enclosing it with less area. The Riemannian Penrose inequality \eqref{eq-RPI} gives a lower bound on the mass in terms of an outer minimising horizon, provided that $(M,g)$ has nonnegative scalar curvature.

Note that the nonnegativity of scalar curvature amounts to an energy condition on the matter source. Specifically, initial data for general relativity must satisfy certain constraint equations and in the time-symmetric case, these constraints are simply
\be 
	R(g)=16\pi \mu,
\ee 
where $\mu$ is the local energy density of the matter fields, which we assume to be nonnegative (imposing the dominant energy condition).

Generally, one also asks that the matter fields satisfy equations of their own, in which case we generally have a coupled system of equations. Of particular importance to the present article is the case where some of the matter corresponds to electromagnetic fields. In this case, we may speak of time-symmetric initial data for the Einstein-Maxwell equations $(M,g,E,B)$ where $E$ and $B$ are vector fields on $M$ corresponding to the electric and magnetic fields respectively. The constraint equations are then
\begin{align}\begin{split}
	R(g)-2\(|E|^2+|B|^2\)&=\mu\\
	\nabla\cdot E&=4\pi\rho\\
	\nabla\cdot B&=0,\end{split}
\end{align}
where $\mu$ is the energy density of the other matter and $\rho$ is the electric charge density of the matter. Note, it is customary to view $B$ as arising from a vector potential and the equation $\nabla\cdot B=0$ is a consequence of this, rather than a genuine constraint. Throughout this article we will set $B=0$ for the sake of exposition, however the results remain true in the presence of magnetic fields (see Appendix \ref{AMagnetic} for details). We will say that a triple $(M,g,E)$ is a \emph{charged asymptotically flat manifold}, provided that $E$ and $\p E$ decay as $O(|x|^{-2})$ and $O(|x|^{-3})$ respectively. We say the \emph{charged dominant energy condition} is satisfied if $R(g)\geq 2|E|^2$. It should be remarked that there is another (stronger) version of the charged dominant energy condition appearing in the literature, which is discussed in more detail in Section \ref{SSPMT}.

For a closed surface $\S$ in $M$, we define the charge enclosed by $\S$ to be given by the flux integral
\be 
Q(\S)=\frac{1}{4\pi}\int_\S E\cdot n\,dS.
\ee 

We also define the total charge of the manifold to be
\be 
Q_\infty=\frac{1}{4\pi}\lim\limits_{R\to\infty}\int_{S_R} E\cdot n\,dS.
\ee 

The charged Riemannian Penrose inequality for a single horizon then follows from an argument of Jang \cite{Jang79} combined with the later development of weak inverse mean curvature flow by Huisken and Ilmanen \cite{H-I01} and can be stated as the following.
\begin{thm}
	Let $(M,g,E)$ be a charged asymptotically flat manifold satisfying the charged dominant energy condition with no charged matter (that is, $\rho\equiv0$). Let $\S$ be a connected outermost minimal surface and assume there exist no other minimal surfaces outside of $\S$, then we have
	\be \label{eq-chargedpenrose1}
	\m_{ADM}\geq \(\frac{|\S|}{16\pi}\)^{\frac12}\(1+\frac{4\pi Q^2}{ |\S|}\),
	\ee 
	where $|\S|$ denotes the area of $\S$ and $Q=Q_\infty=Q(\S)$.
\end{thm}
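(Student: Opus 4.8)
The plan is to run the weak inverse mean curvature flow of Huisken--Ilmanen starting from $\S$ and to track a charge-corrected Hawking mass along it. Writing $\Sigma_t$ for the flow surfaces (the level sets $\{u=t\}$ of the associated weak solution, with $\Sigma_0=\S$), I would introduce
\be
	\mathcal{Q}(t):=\m_H(\Sigma_t)+Q^2\sqrt{\frac{\pi}{|\Sigma_t|}},\qquad \m_H(\Sigma_t)=\sqrt{\frac{|\Sigma_t|}{16\pi}}\(1-\frac{1}{16\pi}\int_{\Sigma_t}H^2\,dA\),
\ee
where $\m_H$ is the usual Hawking mass. The two endpoints of the flow already encode the theorem. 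Since $\Sigma_0=\S$ is minimal, $H\equiv0$ there and a short computation gives $\mathcal{Q}(0)=\sqrt{|\S|/16\pi}\,(1+4\pi Q^2/|\S|)$, which is exactly the right-hand side of \eqref{eq-chargedpenrose1}. At the other end, the standard asymptotics of the weak flow give $\m_H(\Sigma_t)\to\m_{ADM}$ while $|\Sigma_t|\to\infty$ kills the charge term, so $\mathcal{Q}(t)\to\m_{ADM}$. Thus the whole theorem reduces to the monotonicity statement $\mathcal{Q}'(t)\geq0$.

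For the smooth flow this is a direct calculation. Geroch monotonicity, after invoking Gauss--Bonnet together with the hypothesis that the flow surfaces stay connected (so that $\int_{\Sigma_t}K\,dA\leq4\pi$), yields
\be
	\ddt{\m_H(\Sigma_t)}{t}\geq \sqrt{\frac{|\Sigma_t|}{16\pi}}\,\frac{1}{16\pi}\int_{\Sigma_t}\(2\frac{|\nabla H|^2}{H^2}+\frac12(\lambda_1-\lambda_2)^2+R\)dA\geq\sqrt{\frac{|\Sigma_t|}{16\pi}}\,\frac{1}{16\pi}\int_{\Sigma_t}R\,dA.
\ee
Here is the one place the charge enters: the charged dominant energy condition gives $R\geq2|E|^2$, and since there is no charged matter the divergence theorem makes $Q(\Sigma_t)=Q$ constant in $t$, so Cauchy--Schwarz applied to $4\pi Q=\int_{\Sigma_t}E\cdot n\,dA$ yields $\int_{\Sigma_t}|E|^2\,dA\geq 16\pi^2Q^2/|\Sigma_t|$. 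Feeding these in bounds $\ddt{\m_H}{t}$ below by $\tfrac12 Q^2\sqrt{\pi/|\Sigma_t|}$, which is precisely the magnitude of $-\ddt{}{t}\big(Q^2\sqrt{\pi/|\Sigma_t|}\big)$ computed from $\ddt{|\Sigma_t|}{t}=|\Sigma_t|$. The two contributions cancel exactly, so $\mathcal{Q}'(t)\geq0$ for the smooth flow.

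The remaining, and genuinely delicate, point is to upgrade this to the weak flow, which is where I expect the main obstacle to be. The weak solution can jump outward when it encounters a minimal surface, so one must verify that $\mathcal{Q}$ does not decrease across jumps. This is where the structure of the problem cooperates: across a jump the area is continuous and, because $\rho\equiv0$, the enclosed charge is unchanged, so the added term $Q^2\sqrt{\pi/|\Sigma_t|}$ is itself continuous across the jump; monotonicity of $\mathcal{Q}$ across jumps therefore reduces to the non-decrease of $\m_H$ across jumps established by Huisken--Ilmanen. The hypotheses that $\S$ is connected, outermost, and has no minimal surfaces outside it are exactly what guarantee that the weak flow initiates cleanly from $\S$, stays connected (validating the Gauss--Bonnet step above), and satisfies $\m_H(\Sigma_t)\to\m_{ADM}$. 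Combining the smooth monotonicity with this jump analysis gives $\mathcal{Q}(\infty)\geq\mathcal{Q}(0)$, which is the claimed inequality \eqref{eq-chargedpenrose1}.
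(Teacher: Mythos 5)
Your proposal is correct and takes essentially the same approach as the paper's (which is Jang's argument): you track exactly the same quantity, since $\m_H(\S_t)+\sqrt{\pi/|\S_t|}\,Q^2$ is the charged Hawking mass, prove smooth-flow monotonicity by the identical chain (Geroch formula, charged dominant energy condition, Cauchy--Schwarz, constancy of the enclosed charge from $\rho\equiv 0$), and obtain the same exact cancellation against the derivative of the charge term via $|\S_t|=|\S_0|e^{t}$. The only cosmetic differences are that the paper handles the weak flow by inserting the charge estimate directly into Huisken--Ilmanen's integral Geroch monotonicity formula (equation (5.24) of \cite{H-I01}) rather than through a ``smooth between jumps plus jump continuity'' dichotomy (the weak flow is not smooth between jumps, so the integral formula is the rigorous vehicle), and that the endpoint statement one actually has, and needs, is $\lim_{t\to\infty}\m_H(\S_t)\leq \m_{ADM}$ rather than convergence to $\m_{ADM}$.
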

Assuming that the charge and area of $\S$ satisfy $4\pi Q^2\leq |\S|$ then this was also shown to hold in the case the $\S$ is not connected by Khuri, Weinstein and Yamada \cite{KWY-2017}. In the case that the inequality $4\pi Q^2\leq |\S|$ does not hold and $\S$ is not connected, there in fact exist counterexamples, as shown by Weinstein and Yamada \cite{WY-2005}.

\section{The Riemannian Penrose inequality with charged matter} \label{SChargedPenrose}

As mentioned above, the charged Riemannian Penrose inequality has been known for some time, however it is usually stated in the case $\rho\equiv0$; that is, without any charged matter. In \cite{Jang79} Jang's argument was mostly heuristic, since it effectively relied on the existence of a smooth solution to inverse mean curvature flow. This may be why Jang did not consider the inclusion of charged matter, or perhaps because this heuristic argument was given in the context of the cosmic censorship conjecture and Jang went on to discuss issues with the formulation of the conjecture if matter is present. Regardless of the reason for not considering charged matter at the time, since the development of weak inverse mean curvature flow, Jang's argument is viewed as a genuine proof throughout the literature, and the hypothesis excluding charged matter has persevered. Here we recall the details of Jang's proof, keeping track of the charged matter for the sake of clarifying this point.

\subsection{Jang's argument with charged matter}

We consider the notion of a charged Hawking mass of a topological sphere $\S$ (cf. \cite{ACC19,DK-2015}), given by
\be 
\m_H^{CH}(\S)=\( \frac{|\S|}{16\pi}\)^{\frac12}\( 1+\frac{4\pi Q(\S)^2}{|\S|}-\frac{1}{16\pi}\int_\S H^2\, dS \),
\ee 
where $Q(\S)$ is the total electric charge contained within $\S$. and $H$ is the mean curvature of $\S$. With this in mind, we show the following.
\begin{thm}\label{thm-BHQ}
	Let $(M,g,E)$ be a charged asymptotically flat 3-manifold satisfying the charged dominant energy condition and let $\S$ be an area outer minimising sphere and assume there are no closed minimal surfaces in $M$ except possibly $\S$. Assume that exterior to $\S$ it holds that $Q(\S)\nabla\cdot E\geq0$; that is, the charge density of the matter fields either vanishes or is everywhere the same sign as the charge of $\S$. 
	
	 Then
	\be \label{eq-Hawkingineq}
		\m_H^{CH}(\S)\leq\m_{ADM},
	\ee 
 and in particular, if $\S$ is an outermost horizon then
	
	\be\label{eq-chargedpenrose2} 
		\m_{ADM}\geq \(\frac{|\S|}{16\pi}\)^{\frac12}\(1+\frac{4\pi Q^2}{ |\S|}\),
	\ee 
	where $Q=Q(\S)$.
	
	Furthermore, equality holds if and only if $\S$ is a round sphere in a Reissner--Nordstr\"om manifold.
\end{thm}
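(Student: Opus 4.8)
The plan is to follow Jang's monotonicity argument for the charged Hawking mass along weak inverse mean curvature flow, inserting the charge term and checking that the sign hypothesis $Q(\S)\nabla\cdot E\geq 0$ makes the extra terms cooperate. First I would let $\{\S_t\}_{t\geq 0}$ be the weak inverse mean curvature flow starting from $\S_0=\S$, as produced by Huisken--Ilmanen. For a smooth flow the surfaces satisfy $\partial_t x = H^{-1}\nu$, and I would compute the $t$-derivative of $\m_H^{CH}(\S_t)$. The classical (uncharged) computation gives, via the first variation of area $\frac{d}{dt}|\S_t|=|\S_t|$ and the evolution of $\int_{\S_t} H^2\,dS$ controlled by the Gauss--Bonnet theorem together with $R(g)\geq 2|E|^2$, the standard monotonicity of the Geroch/Hawking mass with a defect term involving $\int_{\S_t}(|\nabla H|^2/H^2 + \tfrac12 R + \dots)$. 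The point is that under weak IMCF these manifipulations are justified and $|\S_t|$ grows like $e^t$ with $\S_t$ approaching large coordinate spheres, so $\lim_{t\to\infty}\m_H^{CH}(\S_t)=\m_{ADM}$.

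Next I would isolate the new ingredient: the charge term $\tfrac{4\pi Q(\S_t)^2}{|\S_t|}$ inside the mass. Writing $m(t)=\m_H^{CH}(\S_t)$, its derivative splits into the uncharged Geroch monotonicity piece plus a contribution from $\frac{d}{dt}\big(|\S_t|^{-1/2}Q(\S_t)^2\big)$ (up to the constant normalisation). Here two effects appear. The area factor $|\S_t|^{-1/2}$ decreases, which helps. The charge $Q(\S_t)=\frac{1}{4\pi}\int_{\S_t}E\cdot n\,dS$ is \emph{not} constant along the flow precisely because $\nabla\cdot E=4\pi\rho$ need not vanish; by the divergence theorem $\frac{d}{dt}Q(\S_t)=\frac{1}{4\pi}\int_{\S_t}\tfrac{\nabla\cdot E}{H}\,dS$. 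This is exactly where the hypothesis enters: I would pair this with $Q(\S_t)$ to get $Q(\S_t)\frac{d}{dt}Q(\S_t)=\frac{1}{4\pi}Q(\S_t)\int_{\S_t}\tfrac{\nabla\cdot E}{H}\,dS$, and the assumption $Q(\S)\nabla\cdot E\geq 0$ (together with the fact that the sign of $Q(\S_t)$ is preserved along the flow, since $\frac{d}{dt}Q(\S_t)$ has the same sign as $Q(\S)$) guarantees this product is nonnegative, so the charge grows in magnitude monotonically in the favourable direction and the charge contribution to $\frac{d}{dt}m(t)$ is $\geq 0$. One must still absorb the $|E|^2$ term produced by the scalar curvature bound; the standard trick is that $2|E|^2$ on $\S_t$ combines with the Cauchy--Schwarz estimate $\big(\int_{\S_t}E\cdot n\,dS\big)^2\leq |\S_t|\int_{\S_t}|E|^2\,dS$ to dominate the charge correction, reproducing the Reissner--Nordstr\"om balance.

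Combining these, I would show $\frac{d}{dt}m(t)\geq 0$, hence $m(0)=\m_H^{CH}(\S)\leq \lim_{t\to\infty}m(t)=\m_{ADM}$, which is \eqref{eq-Hawkingineq}. The specialisation to \eqref{eq-chargedpenrose2} is immediate: if $\S$ is an outermost horizon then $H\equiv 0$ on $\S$, so the $\int_\S H^2\,dS$ term drops and $\m_H^{CH}(\S)=\big(\tfrac{|\S|}{16\pi}\big)^{1/2}\big(1+\tfrac{4\pi Q^2}{|\S|}\big)$. For the rigidity statement I would trace back through each inequality used: equality forces the Geroch defect term to vanish (so each $\S_t$ is totally umbilic with constant mean curvature and the ambient curvature is rotationally symmetric), forces equality in the Cauchy--Schwarz step (so $E\cdot n$ is constant on each $\S_t$ and $E$ is normal), and forces $\int_{\S_t}\tfrac{\nabla\cdot E}{H}\,dS\cdot Q(\S_t)=0$ at every time. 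Together these characterise $(M,g,E)$ as spherically symmetric with the electrovacuum profile outside $\S$, i.e. a Reissner--Nordstr\"om manifold.

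The main obstacle I anticipate is the rigour of the monotonicity under the \emph{weak} flow rather than the smooth flow, since the charge term $Q(\S_t)$ involves a flux through the evolving level sets of the IMCF potential and one must verify that the Huisken--Ilmanen machinery (the exponential area growth, the lower semicontinuity used when jumps occur, and the passage to the limit) interacts correctly with this extra nonsmooth term; in particular, at a jump time the surface is replaced by the boundary of its outward minimising hull, and I must check that the charge contribution does not increase across such a jump. Handling this bookkeeping, and confirming that the sign hypothesis is genuinely what rescues the charge term across jumps, is the delicate part; the purely local computations above are routine by comparison.
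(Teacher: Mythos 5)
Your smooth-flow computation is correct, and it is a genuine (if minor) variant of the paper's: you let the charge $Q(\S_t)$ evolve along the flow and extract the surplus term $2\,Q(\S_t)\frac{d}{dt}Q(\S_t)\geq 0$ after Cauchy--Schwarz exactly cancels the area-growth term, whereas the paper freezes the charge at its initial value and works with
\be
\m_o^{CH}(\S_t)=\left(\frac{|\S_t|}{16\pi}\right)^{\frac12}\left(1+\frac{4\pi Q(\S_o)^2}{|\S_t|}-\frac{1}{16\pi}\int_{\S_t}H^2\,dS\right),
\ee
feeding the accumulated charge into the flux through $\S_t$ via the divergence theorem, $\int_{\S_t}E\cdot n\,dS=4\pi Q(\S_o)+\int_{\Omega_t}\nabla\cdot E\,dV$, whose square is bounded below by $16\pi^2Q(\S_o)^2$ precisely because of the sign hypothesis. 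For smooth flow the two bookkeepings are equivalent.

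The genuine gap is the point you flag and then defer: the weak flow. Smooth IMCF does not exist in general -- this is exactly why Jang's original argument was heuristic -- so the weak-flow step is not a technicality to be checked at the end; it is the proof. The paper's frozen-charge quantity is chosen precisely to make this step immediate: since $|\S_t|=|\S_o|e^t$ holds even for the weak flow, the charge correction $\sqrt{\pi}\,Q(\S_o)^2/\sqrt{|\S_t|}$ is an explicit function of $t$, and it can be inserted into Huisken--Ilmanen's integrated Geroch monotonicity formula (their equation (5.24)), which is already valid across jumps; the correction then telescopes exactly against the lower bound $2\int_{\S_t}|E|^2\,dS\geq 32\pi^2Q(\S_o)^2/|\S_t|$. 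Your evolving-charge quantity can also be pushed through jumps, but note that you have stated the required inequality backwards: for monotonicity you must verify that the charge contribution does not \emph{decrease} across a jump, not that it does not increase. This is in fact true, and for a reason worth recording: at a jump the new surface has the same area, encloses the old one, and the divergence theorem on the region in between, combined with the sign hypothesis, gives $|Q(\S_t^{+})|\geq|Q(\S_t^{-})|$; together with Huisken--Ilmanen's jump inequality for the Hawking mass this salvages your version. Finally, your rigidity discussion is only a sketch: deducing spherical symmetry directly from the vanishing of all defect terms is itself a nontrivial theorem. The paper's route is shorter and rigorous -- equality forces $\nabla\cdot E\equiv0$, after which one invokes the known rigidity result of Disconzi and Khuri for the case without charged matter, rather than reproving it.
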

\begin{proof}
We first see that a (modified) charged Hawking mass is monotone under the smooth inverse mean curvature flow, provided that $Q(\S)\nabla\cdot E\geq0$. In particular we show that for a smooth inverse mean curvature flow, starting on a surface $\S_o$, the quantity
\be 
\m_o^{CH}(\S)=\( \frac{|\S|}{16\pi}\)^{\frac12}\( 1+\frac{4\pi Q(\S_o)^2}{|\S|}-\frac{1}{16\pi}\int_\S H^2\, dS \),
\ee 
is monotone provided that the charge density in the region flowed through has the same sign as $Q(\S_o)$. Note that the charge in the above expression is simply the initial charge. Recall that a solution to (smooth) inverse mean curvature flow is a family of surfaces $\{ \S_t\}=X(\S,t)$ satisfying
\bee 
	\frac{\p X}{\p t}=\frac{1}{H} n,
\eee 
where $H>0$ and $n$ is the outward unit normal to $\S_t$.

Using the shorthand $Q=Q(\S_o)$, a well-known computation gives (See \cite{H-I01}, page 395--396 for details)
\be 
	\frac{d}{dt}\( \int_{\S_t} H^2\,dS_t\)\leq \frac12\(16\pi - \int_{\S_t} H^2\,dS_t \)-\int_{\S_t} R\,dS_t,
\ee 
where $R=R(g)$ is the scalar curvature of $g$. Combining this with the charged dominant energy condition gives

\bee
	\frac{d}{dt}\( \int_{\S_t} H^2\,dS_t- \frac{64\pi^2Q^2}{|\S_t|}\)\leq \frac12\(16\pi - \int_{\S_t} H^2\,dS_t\)-2\int_{\S_t}|E|^2\,dS_t+\frac{64\pi^2Q^2}{|\S_t|}.
\eee 
Now let $\Omega_t$ be the region enclosed between $\S_o$ and $\S_t$, and following Jang's original argument \cite{Jang79} we have
\begin{align*}
	\frac{64\pi^2Q^2}{|\S_t|}-2\int_{\S_t}|E|^2\,dS&\leq \frac{64\pi^2Q^2}{|\S_t|}-\frac{2}{|\S_t|}\(\int_{\S_t}|E|\,dS_t\)^2\\
	&\leq\frac{2}{|\S_t|}\(32\pi^2Q^2-\(\int_{\S_t}E\cdot n\,dS_t\)^2\)\\
	&\leq\frac{2}{|\S_t|}\(32\pi^2Q^2-\(4\pi Q+\int_{\Omega_t}\nabla\cdot E\,dV\)^2\)\\
	&\leq\frac{2}{|\S_t|}\(16\pi^2Q^2-8\pi Q\int_{\Omega_t}\nabla\cdot E\,dV-\(\int_{\Omega_t}\nabla\cdot E\,dV\)^2\).
\end{align*}
So, provided that $\int_{\Omega_t}\nabla\cdot E\,dV$ and $Q$ have the same sign, we have
\be \label{eq-evo}
\frac{d}{dt}\(1- \frac{1}{16\pi}\int_{\S_t} H^2\,dS_t+ \frac{4\pi Q^2}{|\S_t|}\)\geq -\frac12\(1- \frac{1}{16\pi}\int_{\S_t} H^2\,dS_t+ \frac{4\pi Q^2}{|\S_t|}\).
\ee
Since $\frac{d}{dt}|\S_t|^{1/2}=\frac12|\S_t|^{1/2}$, this in turn demonstrates that $\m_o^{CH}(\S_t)$ is monotonically non-decreasing, and strictly increasing somewhere if $\nabla\cdot E$ is not identically zero.

It is clear that if the smooth flow exists for all time then this quantity has the same limit as the Hawking mass, namely the ADM mass and therefore flowing out from the horizon would establish the inequality. However, this is not the case and one must employ the weak inverse mean curvature flow of Huisken and Ilmanen \cite{H-I01}. During the weak flow, the surface $\S_t$ may at some point `jump' to a minimising hull. Specifically, at certain times through the flow, the surface may fail to be area outer minimising, which means there is a time where some surface containing $\S_t$ has the same area as $\S_t$, and in particular the new surface is outer minimising. At such a time, the flow simply jumps to this new surface and then continues. It is well-known \cite{H-I01} that, in the absence of obstacles (that is, other boundary components or minimal surfaces in $M$) the area of the new surface is equal to the area of the surface prior to the jump. In fact, the area of $\S_t$ along the flow satisfies $|\S_t|=|\S_o|\exp(t)$, even for the weak formulation of the flow. Since $Q$ is constant and $\S_t$ is continuous along the flow, the monotonicity extends to the weak formulation of inverse mean curvature flow. In order to see this more clearly, recall the Geroch monotonicity formula for the weak inverse mean curvature flow, given by equation (5.24) of \cite{H-I01}:
\be 
	\m_H(\S_s)-\m_H(\S_r)\geq \frac{1}{(16\pi)^{3/2}}\int_r^s|\S_t|^{1/2}\int_{\S_t}(\Xi+R)\,d\mu_t dt
\ee 
where $\m_H(\S_t)=\m_o^{CH}(S_t)-\sqrt{\frac{\pi}{|\S_t|}}Q(\S_o)^2$ is the usual (uncharged) Hawking mass and $\Xi$ denotes a collection of nonnegative terms. Note that this formula is valid for any flow times $s>r\geq0$. Following the smooth case, we see that we have
\begin{align*}
\m_o^{CH}(S_s)-\m_o^{CH}(S_r)&\geq\frac{\sqrt{\pi}Q(\S_o)^2}{\sqrt{|\S_s|}}-\frac{\sqrt{\pi}Q(\S_o)^2}{\sqrt{|\S_r|}}+ \frac{2}{(16\pi)^{3/2}}\int_r^s|\S_t|^{1/2}\int_{\S_t}|E|^2d\mu_t dt\\
&\geq \frac{\sqrt{\pi}Q(\S_o)^2}{\sqrt{|\S_s|}}-\frac{\sqrt{\pi}Q(\S_o)^2}{\sqrt{|\S_r|}}+ \frac{2}{(16\pi)^{3/2}}\int_r^s|\S_t|^{-1/2}\(\int_{\S_t}E_nd\mu_t\)^2 dt\\
&\geq \frac{\sqrt{\pi}Q(\S_o)^2}{\sqrt{|\S_s|}}-\frac{\sqrt{\pi}Q(\S_o)^2}{\sqrt{|\S_r|}}+ \frac{32\pi^2Q(\S_o)^2}{(16\pi)^{3/2}}\int_r^s|\S_t|^{-1/2} dt,
\end{align*}
where the estimate for $\int_{\S_t}|E|^2 d\mu_t$ follows exactly as in the smooth case, also making use of the fact that the charges have the same sign. Notice that the final term in the above expression cancels exactly with the first two terms, since $|\S_t|=|\S_o|\exp(t)$. Therefore, monotonicity also holds for the weak flow giving \eqref{eq-Hawkingineq}, which in turn implies \eqref{eq-chargedpenrose2}.

To prove rigidity, we simply must note that as in \cite{H-I01} (see page 422 therein for more details), we must have equality in each step of the monotonicity computation and therefore all of the discarded terms must vanish everywhere. In particular $\nabla\cdot E$ must be identically zero. We can then directly apply the rigidity statement for the case where no charged matter is present, established by Disconzi and Khuri \cite{DK-2015}.
\end{proof}
By reversing the sign of the charge density associated with matter and taking $Q$ to be the total charge rather than the black hole charge, a similar result holds.

\begin{thm} \label{thm-inftyQ}
	Let $(M,g,E)$ be a charged asymptotically flat 3-manifold satisfying the charged dominant energy condition and let $\S$ be an outermost minimal sphere. Assume further that exterior to $\S$ it holds that $Q_\infty\nabla\cdot E\leq0$ and there are no closed minimal surfaces in $M$ except for $\S$.
	
	 Then
	\bee
	\m_{ADM}\geq \(\frac{|\S|}{16\pi}\)^{\frac12}\(1+ \frac{4\pi Q_\infty^2}{ |\S|}\),
	\eee 
	where $Q_\infty$ is the total charge of the manifold. Furthermore, equality holds if and only if $\S$ is a round sphere in a Reissner--Nordstr\"om manifold.
\end{thm}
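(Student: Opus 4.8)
The plan is to follow the proof of Theorem \ref{thm-BHQ} almost verbatim, the only structural changes being that the charge appearing in the monotone quantity is now the total charge $Q_\infty$ rather than the charge $Q(\S_o)$ of the starting surface, and that the divergence theorem is applied to the \emph{exterior} of $\S_t$ rather than to the compact collar between $\S_o$ and $\S_t$. Concretely, I would run the weak inverse mean curvature flow outward from $\S_o=\S$ and monitor
\be
\m_\infty^{CH}(\S_t)=\(\frac{|\S_t|}{16\pi}\)^{\frac12}\(1+\frac{4\pi Q_\infty^2}{|\S_t|}-\frac{1}{16\pi}\int_{\S_t}H^2\,dS_t\).
\ee
Since $\S$ is minimal, $\m_\infty^{CH}(\S)=\(\frac{|\S|}{16\pi}\)^{1/2}\(1+4\pi Q_\infty^2/|\S|\)$, while as $|\S_t|\to\infty$ the charge correction decays and $\m_\infty^{CH}(\S_t)\to\m_{ADM}$; so monotonicity of $\m_\infty^{CH}$ along the flow yields the stated inequality.

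The one computation that genuinely differs is the estimate controlling $\int_{\S_t}|E|^2\,dS_t$ by the charge. Let $\Omega_t^\infty$ be the unbounded region exterior to $\S_t$ and set $I_t=\int_{\Omega_t^\infty}\nabla\cdot E\,dV$. Applying the divergence theorem on the collar between $\S_t$ and a large coordinate sphere $S_R$ and letting $R\to\infty$ gives the identity $\int_{\S_t}E\cdot n\,dS_t=4\pi Q_\infty-I_t$, to be used in place of $\int_{\S_t}E\cdot n\,dS_t=4\pi Q+\int_{\Omega_t}\nabla\cdot E\,dV$ from Theorem \ref{thm-BHQ}. The Cauchy--Schwarz chain there then terminates in
\be
\frac{64\pi^2Q_\infty^2}{|\S_t|}-2\int_{\S_t}|E|^2\,dS_t\leq\frac{2}{|\S_t|}\(16\pi^2Q_\infty^2+8\pi Q_\infty I_t-I_t^2\).
\ee
The crucial point is that the cross term now appears with a $+$ sign. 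Because $Q_\infty\nabla\cdot E\leq0$ pointwise exterior to $\S$, we have $Q_\infty I_t=\int_{\Omega_t^\infty}Q_\infty\nabla\cdot E\,dV\leq0$, so both $8\pi Q_\infty I_t$ and $-I_t^2$ are non-positive and the right-hand side is again bounded by $32\pi^2Q_\infty^2/|\S_t|$ --- exactly the bound obtained in Theorem \ref{thm-BHQ}. From here the differential inequality \eqref{eq-evo} (with $Q$ replaced by $Q_\infty$) and its weak-flow counterpart via the Geroch monotonicity formula follow line for line, as does rigidity: equality forces every discarded term, and in particular $\nabla\cdot E$, to vanish, so one reduces to the charge-free rigidity statement of Disconzi and Khuri.

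The step requiring the most care --- and the only place where one must do more than reverse a sign --- is that $\Omega_t^\infty$ is noncompact, so $I_t$ is an improper integral and the crude bound $|\nabla\cdot E|=O(|x|^{-3})$ integrated against cubic volume growth is only borderline (indeed logarithmically divergent). I would sidestep estimating $I_t$ directly: from $\int_{\Omega_t^R}\nabla\cdot E\,dV=\int_{S_R}E\cdot n\,dS-\int_{\S_t}E\cdot n\,dS$, the right-hand side has a limit as $R\to\infty$ precisely because $Q_\infty$ is well-defined, so $I_t$ is automatically finite and equals $4\pi Q_\infty-\int_{\S_t}E\cdot n\,dS$, which is all the identity above asserts.

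Finally I would verify the two bookkeeping points that make the weak flow legitimate. First, the exterior sign hypothesis $Q_\infty\nabla\cdot E\leq0$ is inherited by every flow surface, since each $\S_t$ lies outside $\S$ and hence $\Omega_t^\infty\subset$ the region where the hypothesis is assumed. Second, the hypothesis and the quantities entering the monotone functional survive the jumps of the weak flow: the jump regions also lie exterior to $\S$, the total charge $Q_\infty$ is manifestly constant along the flow, and the areas are preserved across jumps in the absence of other minimal surfaces or boundary components. With these settled, the argument is identical in structure to that of Theorem \ref{thm-BHQ}.
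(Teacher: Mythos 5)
Your proposal is correct and follows essentially the same route as the paper: the paper's proof likewise reruns the argument of Theorem \ref{thm-BHQ} with $Q_\infty$ in the monotone quantity, applies the divergence theorem on the region exterior to $\S_t$ to get $\int_{\S_t}E\cdot n\,dS_t=4\pi Q_\infty-\int_{E_t}\nabla\cdot E\,dV$, and uses $Q_\infty\nabla\cdot E\leq0$ to discard the cross term and recover the bound $32\pi^2Q_\infty^2/|\S_t|$, with the weak-flow extension and rigidity carried over verbatim. Your additional remark justifying the finiteness of the improper exterior integral via the existence of the limit defining $Q_\infty$ is a detail the paper leaves implicit, not a different method.
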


\begin{proof}
	This follows almost identically to the proof of Theorem \ref{thm-BHQ}. Let $E_t$ be the region exterior to $\S_t$. Following the argument in the proof of Theorem \ref{thm-BHQ}, we see
	\begin{align*}
	\frac{64\pi^2 Q^2_\infty}{{|\S_t|}}-2\int_{\S_t}|E|^2\,dS&\leq\frac{2}{|\S_t|}\(32\pi^2 Q_\infty^2-\(\int_{\S_t}E\cdot n\,dS_t\)^2\)\\
	&\leq\frac{2}{|\S_t|}\(32\pi^2 Q_\infty^2-\(4\pi Q_\infty-\int_{E_t}\nabla\cdot E\cdot n\,dV\)^2\)\\
	&\leq\frac{32\pi^2 Q_\infty^2}{|\S_t|},
	\end{align*}
	where we make use of the fact that $Q$ has the opposite sign to the charge density of the matter. It is easily verified that the computation for the weak flow is the same and therefore for the same reasons as in the preceding theorem, we have monotonicity under the weak flow of the quantity
	\be 
	\m_\infty^{CH}(\S)=\( \frac{|\S|}{16\pi}\)^{\frac12}\( 1+\frac{4\pi Q_\infty^2}{|\S|}-\frac{1}{16\pi}\int_\S H^2\, dS \).
	\ee 
	This gives the desired result, noting that rigidity is identical to Theorem \ref{thm-BHQ}.
\end{proof}
It is worth remarking that there are also inequalities in the literature relating the size of a matter body -- as opposed to a black hole -- to its total mass and charge \cite{Dain,JK,Reiris}. In each of these results, it is also assumed that there exists no charged matter outside of the matter body in question. However, for the same reasons outlined above, one should expect that this can be relaxed this to allow charged matter outside of body in question provided that it satisfies a sign hypothesis similar to the above. Particularly as such results also follow from inverse mean curvature flow and the monotonicity of the Hawking mass \cite{Dain}. However, we omit the details here, as it is outside of the scope of this article.

\subsection{Counterexamples where the charge density has opposite sign}
We now turn to discuss the case where the charge of the matter does not have the correct sign, and indeed provide counterexamples to \eqref{eq-chargedpenrose} if these hypotheses are not satisfied. To motivate the counterexamples, consider placing a large spherical shell of charged matter outside of a Schwarzschild black hole, far from the horizon. It can be shown that the total mass can be kept close to the mass of the original Schwarzschild solution $\widetilde m=\(\frac{A}{16\pi}\)^{\frac12}$, while the total charge can be made large. In particular, we sketch how a counterexample to Theorem \ref{thm-inftyQ} could be possible without a hypothesis on the sign of the charge density.

This first counterexample that we construct is not smooth along a hypersurface, exhibiting an electric charge in the distributional sense along a thin shell -- a sphere $S_R$ of large radius $R$. While such a construction is not directly a counterexample since one would usually assume smooth solutions, it motivates the results to follow. Since the charge density satisfies $\rho=\frac{1}{4\pi}\nabla\cdot E$, it can be seen that the charge density will have a spike in the distributional sense if the radial flux $E_r=E\cdot\p_r$ is discontinuous along $S_R$. In particular, this may be achieved by considering a (Riemannian) Schwarzschild manifold cut off at some large radius $R$ then identifying this spherical boundary with a sphere of radius $R$ in a Reissner--Nordstr\"om manifold set.

So, consider a Schwarzschild manifold with mass ${\widetilde m}$ and note that it can be expressed in coordinates as 
\be 
	g_{\widetilde m}=ds^2+u_{\widetilde m}(s)^2g_*
\ee 
where $g_*$ is the standard round metric of area $4\pi$ and $u_{\widetilde m}$ is the area radius as a function of the coordinate $s$, satisfying
$$ u_{\widetilde m}'(s)=\sqrt{1-\frac{2{\widetilde m}}{u_{\widetilde m}(s)}}.$$
Similarly the Reissner--Nordstr\"om manifold of mass $m$ and electric charge $Q$ can be expressed in coordinates as
\be 
g_{m,Q}=ds^2+v_{m,Q}(s)^2g_*
\ee 
where $v_{m,Q}$ satisfies
$$ v_{m,Q}'(s)=\sqrt{1+\frac{Q^2}{v_{m,Q}^2}-\frac{2m}{v_{m,Q}(s)}}.$$ Note that the level sets $\{u_{\widetilde m}=r\}$ and $\{v_{m,Q}=r\}$ are spheres of radius $r$, so for the sake of exposition we use the coordinate `$r$' like this on both manifolds.

In order to satisfy the dominant energy condition, we must ensure that the scalar curvature does not have any negative spikes in the distributional sense. This puts us in the realm of the positive mass theorem with corners \cite{Miao02} and the correct condition to ensure this is that mean curvatures match. This condition is encoded in conditions (ii) and (iii) in Lemma \ref{gluing-lemma}, below. It is easy to check that in order for the mean curvatures of both metrics to agree on a sphere of area radius $R$, we must have
\be 
\widetilde m = m - \frac{Q^2}{R}.
\ee 
In particular, if we fix the interior Schwarzschild manifold and fix the charge we would like at infinity, we are free to choose $m$ and the radius $R$ at which we perform the gluing. Specifically, for fixed $Q$ and $\widetilde m$ we are able to choose $m$ arbitrarily close to $\widetilde m = \(\frac{A}{16\pi}\)^{\frac12}$. Since $Q$ and $A$ are fixed, this would violate \eqref{eq-chargedpenrose}.

As it stands, this construction does not immediately contradict any inequalities as mentioned above. Nevertheless, we are able to construct a smooth analogue of this example corresponding to a spherically symmetric charge distribution contained in a thin annular region. To do this, we apply the following lemma, which is essentially Lemma 2.1 of \cite{CCMM} (or Lemma 2.2 of \cite{M-S}) combined with Lemma 2.3 of \cite{M-S} to relax condition (i) below to nonnegative scalar curvature rather than strictly positive.

	\begin{lemma} \label{gluing-lemma}
	Let $f_i:[a_i,b_i]\to \R{^{+}}$, where $i=1,2$, be smooth positive functions, and let $g_*$ be the standard metric on $\mathbb{S}^n$. Suppose that
	\begin{itemize}
		\item[(i)]  the metrics $\gamma_{i}:= dt^2+f_i(t)^2g_*$ have nonnegative scalar curvature; 
		\item[(ii)] $f_1(b_1)<f_2(a_2)$; 
		\item[(iii)]    $ 1 > f_1'(b_1) > 0 $ and $ f_1'(b_1)  \geq f_2'(a_2) >  -1  $. 
	\end{itemize}
	Then,  after translating the intervals one can construct a smooth positive function $f:[a_1,b_2]\to\R{^{+}}$ so that:
	\begin{itemize}
		\item [(I)] $f\equiv f_1$ on $[a_1,\frac{a_1+b_1}{2}]$, $f\equiv f_2$ on $[\frac{a_2+b_2}{2},b_2]$, and
		\item[(II)]  $\gamma:= dt^2+f(t)^2g_*$ has nonnegative scalar curvature on $[a_1,b_2] \times \mathbb{S}^n$.
	\end{itemize}
\end{lemma}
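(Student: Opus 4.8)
The plan is to reduce everything to a one-dimensional problem by computing the scalar curvature of a rotationally symmetric warped product. For $\gamma = dt^2 + f(t)^2 g_*$ on $I\times \mathbb{S}^n$, a standard computation gives
\be
R(\gamma)=\frac{n}{f^2}\Big((n-1)\big(1-(f')^2\big)-2ff''\Big),
\ee
so that the scalar curvature conditions (i) and (II) are both equivalent to the differential inequality $2ff''\le (n-1)(1-(f')^2)$. The crucial elementary observation is that this inequality holds automatically whenever $f$ is (weakly) concave and $|f'|\le 1$, since then $2ff''\le 0\le (n-1)(1-(f')^2)$; in fact it holds strictly as soon as $|f'|<1$. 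Hence it suffices to interpolate between $f_1$ and $f_2$ by a profile that stays concave with derivative strictly inside $(-1,1)$.

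First I would translate the intervals so that $[a_1,b_1]$ and a shifted copy of $[a_2,b_2]$ are separated by a gap of length $T$, to be chosen, and build $f$ equal to $f_1$ on $[a_1,\frac{a_1+b_1}{2}]$ and to $f_2$ on the outer half of its interval, as demanded by (I). Writing $w=f'$ and $s_1:=f_1'(b_1)$, $s_2:=f_2'(a_2)$, $h_1:=f_1(b_1)$, $h_2:=f_2(a_2)$, the goal becomes to produce a smooth $w$ decreasing monotonically from $s_1$ to $s_2$, matching $f_1'$ and $f_2'$ together with all higher derivatives at the two junctions. Monotone decrease forces $w'\le 0$, hence concavity and $R(\gamma)\ge 0$; condition (iii) supplies exactly $s_2\le s_1$ with $s_1,s_2\in(-1,1)$, so $w$ can be kept in $[s_2,s_1]\subset(-1,1)$. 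The one remaining scalar constraint is the height balance $\int w\,dt=h_2-h_1$, which is positive by (ii). Since the average of a nonconstant monotone $w$ can be placed anywhere in $(s_2,s_1)$ by front- or back-loading the descent, one tunes $T$ to realize the required gain: when $s_2\le 0$ positivity of the integral holds for every $T$ and one only needs $T>(h_2-h_1)/s_1$, while when $0<s_2<s_1$ one takes $T\in\big((h_2-h_1)/s_1,(h_2-h_1)/s_2\big)$.

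The hard part is the smooth matching to the given pieces, and this is where the hypotheses are really used. Deviating from $f_1$ (respectively $f_2$) on the inner half of its interval by a bump necessarily introduces regions where $f''$ exceeds $f_1''$; if $\gamma_1$ is only scalar-flat there (as happens for Schwarzschild, where $R\equiv 0$) there is no slack, and a naive perturbation can drive $2ff''$ above $(n-1)(1-(f')^2)$ and destroy nonnegativity of $R$. The resolution is to bend the profile only downward near the junctions — increasing concavity rather than decreasing it — and to route the transition through the strict inequality, which is automatic in the interpolation region since $|w|<1$ there. The borderline equal-slope case $s_1=s_2$, in which a strictly monotone $w$ has no room, is handled by first bending $f_1$ and $f_2$ into short affine segments of slopes $s_1$ and $s_2$ before joining them. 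Verifying $2ff''\le(n-1)(1-(f')^2)$ uniformly across the transition zones, where $f''$ changes sign, is the single genuinely delicate estimate.

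Finally, to obtain the conclusion under merely nonnegative scalar curvature in (i) rather than strictly positive — the content of the refinement over Lemma 2.1 of \cite{CCMM} — I would invoke the approximation of Lemma 2.3 of \cite{M-S}: perturb $f_i$ slightly to strictly positive scalar curvature while fixing the junction data, apply the strictly-positive gluing above, and pass to the limit. This isolates the scalar-flat difficulty into a soft approximation step and keeps the main construction in the strictly-positive regime, where the differential inequality has the slack needed to match all derivatives smoothly while holding $w$ inside $(-1,1)$.
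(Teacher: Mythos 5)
Your reduction and overall architecture are sound, and they are in fact the same strategy that underlies the results the paper leans on: note that the paper gives no proof of this lemma at all, but cites Lemma 2.1 of \cite{CCMM} (equivalently Lemma 2.2 of \cite{M-S}) together with Lemma 2.3 of \cite{M-S}. Your warped-product scalar curvature formula is correct, as is the observation that concavity plus $|f'|<1$ forces $R\geq 0$, and the bookkeeping for the gap length $T$ (including the case split $s_2\le 0$ versus $0<s_2\le s_1$) is right. So the elementary part of the argument is in place.

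However, there are two genuine gaps. The first is the step you yourself defer as ``the single genuinely delicate estimate'': verifying $2ff''\le (n-1)\bigl(1-(f')^2\bigr)$ across the zones where $f$ peels off from $f_1$ and where it merges into $f_2$. This is not a technical afterthought; it is the entire content of the lemma, since at the peel-off point the deviation vanishes to infinite order and $f_1$ may be scalar-flat there (Schwarzschild), so there is no strict inequality to ``route through'' precisely where you need it. At the $f_1$ end your ``bend downward'' idea can indeed be completed: writing $f=f_1-\phi$ with $\phi\geq 0$ convex, vanishing to infinite order at the peel-off point, and $0\le \phi'\le f_1'$ (using $0<f_1'<1$ near $b_1$), one gets the chain
\[
2ff''\;\le\; 2ff_1''\;\le\;\max\{0,\,2f_1f_1''\}\;\le\;(n-1)\bigl(1-(f_1')^2\bigr)\;\le\;(n-1)\bigl(1-(f')^2\bigr).
\]
But the mirrored argument at the $f_2$ end fails when $f_2'(a_2)>0$: bending downward near a \emph{left} endpoint increases the slope ($f'=f_2'-\psi'\ge f_2'$ since $\psi'\le 0$ there), which shrinks $1-(f')^2$, so the last inequality above reverses exactly in the case $f_2''>0$. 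This asymmetry is precisely why the gluing lemma of \cite{M-S} assumes \emph{strictly} positive scalar curvature (their junction matching uses openness, hence needs slack), and why a separate lemma is required to create that slack. Your proposal contains no verification at either end, so as written the proof is not there. The second gap is your treatment of the merely nonnegative case: you propose to perturb each $f_i$ to strictly positive scalar curvature ``while fixing the junction data'' and pass to the limit. Even granting the existence of such global perturbations (which for a scalar-flat profile requires an argument, not a remark), the glued functions agree with the \emph{perturbed} profiles on the inner halves, so conclusion (I) for the original $f_i$ holds only after the limit; that limit requires uniform $C^2$ (indeed $C^\infty$) control of the whole construction as the perturbation, and hence the slack your junction estimates would rely on, degenerates -- and you establish none of this. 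The route the paper actually cites, Lemma 2.3 of \cite{M-S}, avoids the issue entirely: it bends only on the outer halves, where (I) permits deviation, producing strict positivity exactly where the gluing takes place, with no limiting argument at all.
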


This allows us to prove:

\begin{thm} \label{thm-counterexampleinfty}
	For any $A>0, Q>0$ and $m>\(\frac{A}{16\pi}\)^{\frac12}$ there exists a charged asymptotically flat manifold $(M,g,E)$ satisfying the charged dominant energy condition, with mass $m$, asymptotic charge $Q$ and outermost minimal surface with area $A$.
\end{thm}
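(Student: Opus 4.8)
The plan is to make the heuristic construction above rigorous by gluing a Schwarzschild interior of mass $\widetilde{m} = (A/16\pi)^{1/2}$ to a Reissner--Nordstr\"om exterior of mass $m$ and charge $Q$ across a large annulus, using Lemma \ref{gluing-lemma} for the metric gluing. The one point that goes beyond the uncharged corner constructions is that the charged dominant energy condition $R(g) \geq 2|E|^2$ must survive the gluing, whereas Lemma \ref{gluing-lemma} only controls the sign of $R(g)$. My key device is to \emph{decouple} the two transitions: perform the metric interpolation in one annulus, and switch the electric field on in a disjoint outer annulus where the metric is already exactly Reissner--Nordstr\"om, so that the scalar curvature there is $2Q^2/r^4$ and leaves room for a partial charge.

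First fix $\widetilde{m} = (A/16\pi)^{1/2}$, so that the Schwarzschild horizon $\{r = 2\widetilde{m}\}$ has area $A$. Recall that for a warped product $dt^2 + f(t)^2 g_*$ the scalar curvature is $R = 2f^{-2}(1 - (f')^2) - 4f''/f$, which vanishes for $f = u_{\widetilde{m}}$ and equals $2Q^2/r^4$ for $f = v_{m,Q}$. I would choose gluing radii $R_1 < R_2$, take $f_1 = u_{\widetilde{m}}$ out to area radius $R_1$ and $f_2 = v_{m,Q}$ from area radius $R_2$ outward, and verify the hypotheses of Lemma \ref{gluing-lemma}. Condition (ii) is immediate from $R_1 < R_2$; condition (iii) amounts, after squaring, to $2m/R_2 - Q^2/R_2^2 \geq 2\widetilde{m}/R_1$ together with $R_1 > 2\widetilde{m}$, and both hold once $R_1$ is large — quantitatively once $R_1 \gtrsim Q^2/(m - \widetilde{m})$ with $R_2$ taken just above $R_1$. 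This is exactly where the hypothesis $m > \widetilde{m}$ enters. The lemma then yields a smooth metric $g$ with $R(g) \geq 0$, agreeing with Schwarzschild for $r \leq R_1$ and with Reissner--Nordstr\"om (mass $m$, charge $Q$) for $r \geq R_2$.

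Next I would define $E$ to be radial with a smooth nondecreasing enclosed-charge function $Q(r)$ that vanishes for $r \leq R_3$ and equals $Q$ for $r \geq R_4$, where $R_2 < R_3 < R_4$; then $|E|^2 = Q(r)^2/r^4$. The charged dominant energy condition is checked piecewise: for $r \leq R_2$ one has $E \equiv 0$ and $R(g) \geq 0 = 2|E|^2$; for $r \geq R_2$ the metric is exactly Reissner--Nordstr\"om, so $R(g) = 2Q^2/r^4 \geq 2Q(r)^2/r^4 = 2|E|^2$ because $Q(r) \leq Q$. The slack on $[R_3, R_4]$ is carried by ordinary charged matter of nonnegative energy density $R(g) - 2|E|^2 = 2(Q^2 - Q(r)^2)/r^4 \geq 0$. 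The invariants are then read off from the exterior: $\m_{ADM} = m$ and $Q_\infty = Q$. Finally, taking $R_1 > 2m$ one checks that the transition region contains no minimal spheres — the local mass $\frac{r}{2}(1 - (f')^2)$ cannot exceed $m < r/2$ there — so $f' > 0$ for all $r > 2\widetilde{m}$ and every coordinate sphere outside the horizon is strictly mean-convex; hence $\{r = 2\widetilde{m}\}$ is the outermost minimal surface, of area $A$. Note in particular that no assumption $m \geq Q$ is needed: for $r \geq R_2$ with $R_2$ large the Reissner--Nordstr\"om metric is regular whatever the sign of $m - Q$.

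The main obstacle is precisely the compatibility of the energy condition with the gluing. A single-radius cut-and-paste of $E$, as in the distributional sketch, would force one to control $R(g)$ against $2|E|^2$ throughout the metric-transition annulus $[R_1, R_2]$, where only $R(g) \geq 0$ is available, and there the partial charge could well violate $R(g) \geq 2|E|^2$. Separating the metric gluing from the charge build-up removes this difficulty entirely, and is the step I would set up most carefully; the remaining verifications — Lemma \ref{gluing-lemma}(iii), the monotonicity $f' > 0$ ruling out spurious minimal surfaces, and the standard asymptotics of $E$ ensuring $(M,g,E)$ is charged asymptotically flat — are then routine. Since $m$ may be taken arbitrarily close to $\widetilde{m} = (A/16\pi)^{1/2}$ while $Q$ is large, the resulting manifold violates \eqref{eq-chargedpenrose}, confirming that the sign hypothesis in Theorem \ref{thm-inftyQ} cannot be dropped.
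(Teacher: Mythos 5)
Your proposal is correct and follows essentially the same route as the paper: glue a Schwarzschild interior of mass $\widetilde m=\(\frac{A}{16\pi}\)^{1/2}$ to a Reissner--Nordstr\"om exterior of mass $m$ and charge $Q$ via Lemma \ref{gluing-lemma} (with condition (iii) secured by $m>\widetilde m$ and a large gluing radius), and then switch the electric field on by a cutoff supported strictly inside the region where the metric is already exactly Reissner--Nordstr\"om, so that $R(g)=2|E_o|^2\geq 2|E|^2$ there and $R(g)\geq 0=2|E|^2$ elsewhere. The ``decoupling'' you highlight as your key device is exactly the paper's construction (its cutoff annulus $D$ lies outside $R'$), and your additional check that no spurious minimal surfaces arise goes slightly beyond what the paper records.
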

\begin{proof}
	Let $\widetilde m=\(\frac{A}{16\pi}\)^{\frac12}$ and consider the Schwarzschild manifold of mass $\widetilde m$, as above. Again, excise the region outside of $r=R$ and we will glue this smoothly to a Reissner--Nordstr\"om exterior using Lemma \ref{gluing-lemma}. Note that both the Schwarzschild manifold and the Reissner--Nordstr\"om manifold satisfy condition (i) of the lemma. In order to create a small amount of space for the smooth gluing, we consider the exterior Reissner--Nordstr\"om manifold outside of a radius $r=R':=R+\frac1R$, which amounts to satisfying condition (ii). We would again like the mean curvatures to match, or rather that the mean curvature can at worst drop across the gluing region. That is, we ask 
	\be 
		\frac{\widetilde m}{R} \leq \frac{m}{R'} - \frac{Q^2}{R'^2},
	\ee 
	where $m$ and $Q$ are mass and charge of the Reissner--Nordstr\"om manifold, respectively. Note that in the context of Lemma \ref{gluing-lemma}, this is condition (iii). Since $\widetilde m<m$, this is easily achieved by sufficiently large $R$. Applying the lemma results in an asymptotically flat manifold that is exactly equal to a Schwarzschild manifold of mass $\widetilde m$ near the horizon and exactly equal to a Reissner--Nordstr\"om manifold of mass $m$ and charge $Q$ near infinity.

Of course, without equipping our manifold with an electric field $E$, we cannot speak of the electric charge yet. Naturally though, we would like the electric field to agree with the Reissner--Nordstr\"om electric field $E_o$ associated to our exterior region. Define an annular region $D:=\{ R_2<r<R_2+\frac{1}{R_2} \}$ where $R_2>R'$ and a cut-off function $\chi=\chi(r)$ that is equal to $0$ for $r<R_2$ and equal to $1$ for $r>R_2+\frac{1}{R_2}$, with $\chi'\geq0$. We now fix the electric field to be $E=\chi E_o$, ensuring that the field vanishes everywhere that the metric is not identically Reissner--Nordstr\"om and yields the same asymptotic electric charge $Q$. It is easy to check that the charged dominant energy condition is satisfied everywhere, that is,
$$ R(g)\geq 2|E|^2. $$

This follows from the fact that $R(g)$ is nonnegative everywhere and outside $r=R'$ we have $R(g)=2|E|^2\geq 2|E_o|^2$.

\end{proof}
It is clear that $\nabla\cdot E$ is nonvanishing over the annular region $D$ in the above example, and in fact it resembles a thin annular region of positive charge. One could in principle make this annular region arbitrarily thin, approximating the nonsmooth example given first. It should be noted that the exterior regions considered in the above example are permitted to be the regions exterior to superextremal Reissner--Nordstr\"om solutions.
\begin{remark}
	Note that this construction allows us to construct solutions with arbitrarily large electric charge and with horizon area arbitrarily close to the optimal permitted by the Riemannian Penrose inequality; that is, the uncharged inequality. This should be compared to the work of Mantoulidis and Schoen \cite{M-S}, where they showed that given and metric on the $2$-sphere satisfying a certain stability condition one can construct asymptotically flat manifolds with ADM mass arbitrarily close to the optimal mass given by the Riemannian Penrose inequality. These extensions are exactly isometric to Schwarzschild manifolds outside of a compact set, so this combined with the above construction shows the following: if $g$ is a metric on the $2$-sphere $\mathcal{S}^2$ and the operator $L=-\Delta_g+\frac12 R(g)$ has positive first eigenvalue, then for any $Q>0$ and $\veps>0$ there exists a charged asymptotically flat manifold with boundary isometric to $(\mathcal{S}^2,g)$, total charge $Q$ and mass $\m<\(\frac{|\mathcal{S}^2|_g}{16\pi}\)^{1/2}+\veps$. This may be compared to a recent result of Alaee, Cabrera Pacheco, and Cederbaum \cite{ACC19} which establishes the following: under the same eigenvalue hypothesis on $g$ as used in \cite{M-S}, and for $Q$ not too large relative to $|\mathcal{S}^2|$, one can construct asymptotically flat manifolds with boundary isometric to $(\mathcal{S}^2,g)$, total charge $Q$, vanishing charge density, and mass $\m<\(\frac{|\mathcal{S}^2|_g}{16\pi}\)^{1/2}+\(\frac{\pi}{|\mathcal{S}^2|_g}\)^{1/2}Q^2+\veps$.
\end{remark}

If we view \eqref{eq-chargedpenrose} as referring to the black hole charge as is the case in Theorem \ref{thm-BHQ}, then we again see that the hypothesis on the sign of the charge density is required.

\begin{thm} \label{thm-counterexampleBH}
	For any $A>0$ and $Q>0$ there exists a charged asymptotically flat manifold $(M,g,E)$ satisfying the charged dominant energy condition, having ADM mass ${\widetilde m<\(\frac{A}{16\pi}\)^{1/2}\(1+\frac{4\pi Q^2}{A}\)}$, black hole charge $Q$, and outermost minimal surface whose area is $A$.
\end{thm}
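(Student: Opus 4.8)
The plan is to mirror the construction in Theorem \ref{thm-counterexampleinfty}, but with the roles of the Schwarzschild and Reissner--Nordstr\"om pieces interchanged, so that the charge now sits on the horizon rather than at infinity. Given $A>0$ and $Q>0$, set $r_+=\sqrt{A/(4\pi)}$ and $m=\frac{r_+}{2}+\frac{Q^2}{2r_+}$. Since $m\geq Q$ by the AM--GM inequality, there is a genuine (sub-extremal or extremal) Reissner--Nordstr\"om manifold of mass $m$ and charge $Q$ whose horizon has area radius $r_+$, hence area $A$; moreover a direct computation using $r_+^2-2mr_++Q^2=0$ shows that $m$ is exactly the right-hand side of \eqref{eq-chargedpenrose}, namely $m=\left(\frac{A}{16\pi}\right)^{1/2}\left(1+\frac{4\pi Q^2}{A}\right)$. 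Thus it suffices to produce an asymptotically flat manifold with this horizon (area $A$, charge $Q$) but with ADM mass $\widetilde m$ strictly less than $m$.

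First I would take the Reissner--Nordstr\"om manifold above as the interior piece, keeping the region from the horizon out to a large area radius $R$, and glue it to the exterior of a Schwarzschild manifold of mass $\widetilde m$ lying outside area radius $R'=R+\frac1R$ (so that condition (ii) of Lemma \ref{gluing-lemma} holds). Setting $f_1=v_{m,Q}$ and $f_2=u_{\widetilde m}$, the key part of condition (iii), $v_{m,Q}'(R)\geq u_{\widetilde m}'(R')$, reduces after squaring to an inequality of the form $\widetilde m\geq \frac{R'}{R}\left(m-\frac{Q^2}{2R}\right)=m-\frac{Q^2}{2R}+O(R^{-2})$. Because $Q>0$, for $R$ large this lower bound is strictly less than $m$, so the interval of admissible $\widetilde m$ meets $(-\infty,m)$ and I may fix some $\widetilde m<m$ in it; the remaining parts of (iii), namely $0<v_{m,Q}'(R)<1$, hold automatically once $R$ is large. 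Applying Lemma \ref{gluing-lemma} yields a manifold $(M,g)$ of nonnegative scalar curvature that is exactly Reissner--Nordstr\"om near the horizon and exactly Schwarzschild of mass $\widetilde m$ near infinity, so its ADM mass is $\widetilde m<m$.

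It remains to equip $M$ with an electric field carrying charge $Q$ through the horizon while preserving the charged dominant energy condition. As in Theorem \ref{thm-counterexampleinfty} I would take $E=\chi E_o$, where $E_o$ is the Reissner--Nordstr\"om field and $\chi=\chi(r)$ is a cut-off, but now placed in a thin annulus $D=\{R_2<r<R_2+\frac1{R_2}\}$ with $R_2<R$, i.e. inside the region where the metric is still Reissner--Nordstr\"om, equal to $1$ near the horizon and $0$ before the gluing region. Then near the horizon $E=E_o$ gives $Q(\S)=Q$, while near infinity $E=0$ gives $Q_\infty=0$, so the matter in $D$ carries net charge $-Q$, opposite to the horizon; this is precisely the violation of the sign hypothesis of Theorem \ref{thm-BHQ}. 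Because the cut-off is supported where the metric is Reissner--Nordstr\"om, there $R(g)=2|E_o|^2\geq 2|\chi E_o|^2=2|E|^2$, and elsewhere $E=0$ while $R(g)\geq0$, so the charged dominant energy condition holds throughout.

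The main thing to check carefully is the mean-curvature/gluing condition (iii): one must confirm that $v_{m,Q}'(R)\geq u_{\widetilde m}'(R')$ is compatible with $\widetilde m<m$, which is exactly where the hypothesis $Q>0$ enters and which drives the whole counterexample. A secondary point is to verify that the gluing introduces no new minimal surfaces; since the area radius is strictly increasing on the Reissner--Nordstr\"om part ($v_{m,Q}'>0$ for $r>r_+$) and on the Schwarzschild part ($u_{\widetilde m}'>0$ for $r>2\widetilde m$, which holds as $R'>2\widetilde m$ for large $R$), one checks that the interpolating warping function from Lemma \ref{gluing-lemma} likewise has positive derivative, so the horizon at $r_+$ remains the outermost minimal surface, with area $A$ and charge $Q$. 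Combining these facts, $(M,g,E)$ has ADM mass $\widetilde m<m=\left(\frac{A}{16\pi}\right)^{1/2}\left(1+\frac{4\pi Q^2}{A}\right)$, contradicting \eqref{eq-chargedpenrose}.
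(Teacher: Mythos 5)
Your proposal is correct and is essentially the paper's own proof: a Reissner--Nordstr\"om interior of mass $m=\left(\frac{A}{16\pi}\right)^{1/2}\left(1+\frac{4\pi Q^2}{A}\right)$ and charge $Q$ is glued via Lemma~\ref{gluing-lemma} to a Schwarzschild exterior of mass $\widetilde m<m$ at large radius, with the electric field then cut off by $E=\chi E_o$ in an annulus where the metric is still exactly Reissner--Nordstr\"om, so that the charged dominant energy condition is preserved and the transition region carries charge of sign opposite to the horizon. The differences are cosmetic -- your matching threshold $\widetilde m\geq\frac{R'}{R}\left(m-\frac{Q^2}{2R}\right)$ is the (correctly computed) analogue of the paper's displayed mean-curvature condition, and your check that no new minimal surfaces arise is a detail the paper leaves implicit -- though note that both arguments tacitly assume $A\geq 4\pi Q^2$: if $r_+=\sqrt{A/(4\pi)}<Q$ then $r_+$ is the smaller root of $r^2-2mr+Q^2$, and the horizon of the chosen Reissner--Nordstr\"om manifold actually has area radius $Q^2/r_+>r_+$, so its area is not $A$.
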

\begin{proof}
	Let $m=\(\frac{A}{16\pi}\)^{1/2}\( 1+ \frac{4\pi Q^2}{A}\)$ and consider the Reissner--Nordstr\"om manifold of mass $m$ and charge $Q$.

	 Again, excise the region outside of $r=R>R_H$ and we will now glue this smoothly to a Schwarzschild exterior using Lemma \ref{gluing-lemma}. Consider the exterior Schwarzschild manifold outside of a radius $r=R':=R+\veps$, for $\veps>0$ to be chosen small. The mean curvature matching condition is effectively the same as the above, but reversed:
	\be 
	\frac{\widetilde m}{R'} \geq \frac{m}{R} - \frac{Q^2}{R^2},
	\ee 
	where $\widetilde m$ is the mass of the Schwarzschild manifold that we are gluing to the exterior. We therefore choose
	\be 
		\widetilde m=R'\( \frac{m}{R} - \frac{Q^2}{R^2} \)=\frac{R'}{R}\(\frac{A}{16\pi}\)^{\frac12}\(1+\frac{4\pi Q^2}{A}\)-\frac{R'Q^2}{R^2}.
	\ee 
	
	Note that the mass $\widetilde m$ of the Schwarzschild exterior satisfies $$\widetilde m<\(\frac{A}{16\pi}\)^{\frac12}\(1+\frac{4\pi Q^2}{A}\),$$ provided that $\veps$ is small.
	
 Again we may apply Lemma \ref{gluing-lemma}, and we obtain an asymptotically flat manifold that is exactly equal to a Reissner--Nordstr\"om manifold of mass $m$ and charge $Q$ near the horizon and exactly equal to a Schwarzschild manifold of mass $\widetilde m$ near infinity.
	
	As above, we interpolate the electric field between the two manifolds somewhere that the new manifold is exactly Reissner--Nordstr\"om. In this case, the resultant electric field satisfies $\nabla\cdot E=0$ everywhere except the transition region where $\nabla\cdot E\leq0$.
	
\end{proof}

\subsection{Other hypotheses for the charged Penrose inequality}\label{SSPMT}
The Riemannian Penrose inequality is generally seen as a strengthened version of the positive mass theorem. When one speaks of the charged positive mass theorem, one usually means to refer to the inequality
\be \label{eq-QPMT}
	\m_{ADM}\geq |Q|.
\ee 
Again, it is quite common to state this result in the case $\nabla\cdot E\equiv 0$ so one should be careful when referring to the literature. It was first shown by Gibbons and Hull \cite{Gibbons-Hull} for black holes with no surrounding charged matter, and then by Gibbons, Hawking, Horowitz and Perry \cite{GHHP} in the case where charged matter is present (see also \cite{CRT}). Naturally, this inequality should also make sense without the presence of black holes so one generally considers $Q=Q_\infty$, the total charge in this inequality. Clearly we can write down initial data with very small mass density and very large charge density, and then one should not expect this inequality to hold. Indeed, we can see this inequality is violated by the counterexample given by Theorem \ref{thm-counterexampleinfty}, for sufficiently large $Q$. However, if we additionally impose that \eqref{eq-QPMT} holds in some sense for the matter field densities, then indeed \eqref{eq-QPMT} holds for the asymptotically defined quantities. Specifically, if one imposes the pointwise condition
\be \label{eq-QDEC2}
	R(g)-2|E|^2\geq 4|\nabla\cdot E|
\ee 
then \eqref{eq-QPMT} holds. Note that \eqref{eq-QDEC2} is a stronger condition than the charged dominant energy condition we use here.

	Recently, Khuri, Weinstein and Yamada gave a version of the charged Riemannian Penrose inequality allowing for charged matter, provided that the charged matter is compactly supported (Theorem 1.3 of \cite{KWY-exts-2015}). The proof therein relies on their earlier work on the charged Riemannian Penrose inequality for multiple black holes \cite{KWY-2017}, without the presence of charged matter. The proof is an adaptation of Bray's conformal flow method \cite{Bray01} to the Einstein--Maxwell case and crucially makes use of the charged positive mass theorem \eqref{eq-QPMT}. However, as mentioned above, \eqref{eq-QPMT} fails to hold under the standard charged dominant energy condition and in fact requires the stronger hypothesis given by \eqref{eq-QDEC2}. Indeed, Theorem \ref{thm-counterexampleinfty} above demonstrates that Theorem 1.3 of \cite{KWY-exts-2015} requires a stronger hypothesis than the standard charged dominant energy condition. However, if one replaces the charged dominant energy condition with \eqref{eq-QDEC2} then the analysis in \cite{KWY-2017} carries over with some minor modifications, and therefore the conclusion of Theorem 1.3 in \cite{KWY-exts-2015} holds. In particular, if one replaces equation (2.1) of \cite{KWY-2017} with
	\be 
		\Delta_{g_t} v_t-\(|E_t|^2_{g_t}+|B_t|^2_{g_t}+|\text{div}_{g_t}E_t|\)v_t=0
	\ee 
	then condition \eqref{eq-QDEC2} is preserved throughout this slightly modified flow, and the charged positive mass theorem can be applied throughout the flow.
	
	With these minor modifications to \cite{KWY-2017,KWY-exts-2015} one finds that the sign condition on the charge density is in fact not required, provided that one instead imposes that the condition \eqref{eq-QDEC2} holds, and asks that the charge density be compactly supported. In \cite{KWY-exts-2015}, it is also demonstrated that a counterexample to the charged Riemannian Penrose inequality exists if the charge density is not compactly supported; however, it is not clear that such a counterexample can be constructed satisfying \eqref{eq-QDEC2}. In fact, we conjecture that \eqref{eq-chargedpenrose} holds even if the hypothesis of compactly supported charge is dropped, provided that \eqref{eq-QDEC2} holds, which amounts to asking that the mass density of matter is at least equal to the magnitude of the charge density, pointwise. If we instead consider the matter as being discrete particles, each with a mass $m_i$ and charge $q_i$ satisfying $m_i\geq|q_i|$ then the Penrose heuristic argument can still be applied. Specifically, no particles escaping to infinity can increase the quantity $m^2-Q^2$ so we expect that \eqref{eq-chargedpenroseareaform} should hold for any such configuration of matter. Since the condition \eqref{eq-QDEC2} is roughly a continuous version of the condition $m_i\geq|q_i|$ imposed on discrete particles, it seems reasonable to conjecture that \eqref{eq-chargedpenroseareaform} holds under such a hypothesis.

\section*{Acknowledgements}
I would like to thank Marcus Khuri for useful discussions, and in particular for clarifying the modifications to the work in \cite{KWY-2017} that are required to ensure that the charged Riemannian Penrose inequality holds under the stronger energy condition, as discussed above. I would like to thank Armando Cabrera Pacheco for helpful comments on an early draft of this article. I would also like to thank the anonymous reviewers for detailed and helpful reviews that led to improvements in the presentation of this article.

\bigskip

\appendix
\section{Magnetic fields and magnetically charged matter}\label{AMagnetic}
	Throughout this note we have avoided considering magnetic fields for the sake of exposition, in part because usually one imposes that the magnetic field should be divergence free; that is, no magnetically charged matter. However, it is well-known that analogous results hold for magnetically charged black holes and indeed one could even permit magnetically charged matter. In this context, we think of a magnetic field as being a vector field $B$ satisfying $\nabla\cdot B=\rho_B$, where $\rho_B$ is the magnetic charge density. One would usually set $\rho_B$ to zero, however one could imagine permitting a nonzero magnetic charge density and mathematically this is entirely analogous to the electric field. In particular, this means the analysis is the same. One interesting thing to note is that in the presence of both electric an magnetic fields, the electromagnetic field carries linear momentum, so the dominant energy condition in this case would become
	\be \label{eq-QDEC3}
		R(g)-2\(|E|^2+|B|^2\)\geq 4|E\times B|.
	\ee 
	However, since we do not include linear momentum in the charged Riemannian Penrose inequality, we simply ask that
		\be \label{eq-QDEC4}
	R(g)-2\(|E|^2+|B|^2\)\geq 0.
	\ee 
	In what follows, we will now consider a \emph{magnetically charged asymptotically flat manifold} to be the same as a charged asymptotically flat manifold used above, with the addition of a vector field $B$ corresponding to the magnetic field. We define the magnetic charge of a surface $\S$ by
\be 
Q_B(\S)=\frac{1}{4\pi}\int_\S B\cdot n\,dS,
\ee 
	with the charge at infinity defined similarly. The appropriate magnetically charged Hawking mass of a topological sphere $\S$ is
	\be 
	\m_H^{MCH}(\S)=\( \frac{|\S|}{16\pi}\)^{\frac12}\( 1+\frac{4\pi Q(\S)^2+4\pi Q_B(\S)^2}{|\S|}-\frac{1}{16\pi}\int_\S H^2\, dS \).
	\ee 
	
	Following the arguments above, one obtains the following analogues of Theorem \ref{thm-BHQ} and Theorem \ref{thm-inftyQ}, respectively.
	
	\begin{thm}\label{thm-BHQ-A}
		Let $(M,g,E,B)$ be a magnetically charged asymptotically flat 3-manifold satisfying \eqref{eq-QDEC4}, let $\S$ be an area outer minimising sphere and assume there are no closed minimal surfaces in $M$ except possibly $\S$. Assume that exterior to $\S$ it holds that $Q(\S)\nabla\cdot E\geq0$ and $Q_B(\S)\nabla\cdot B\geq0$.
		
		Then
		\be \label{eq-Hawkingineq-A}
		\m_H^{MCH}(\S)\leq\m_{ADM},
		\ee 
		and in particular, if $\S$ is an outermost horizon then
		
		\be\label{eq-chargedpenrose2-A} 
		\m_{ADM}\geq \(\frac{|\S|}{16\pi}\)^{\frac12}\(1+\frac{4\pi Q^2+4\pi Q^2_B}{ |\S|}\),
		\ee 
		where $Q=Q(\S)$ and $Q_B=Q_B(\S)$ are the black hole charges.
		
		Furthermore, equality holds if and only if $\S$ is a round sphere in a Reissner--Nordstr\"om manifold, possibly with both electric and magnetic charges.
	\end{thm}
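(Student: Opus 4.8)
The argument runs in close parallel to the proof of Theorem \ref{thm-BHQ}, the only new feature being that the magnetic field enters on exactly the same footing as the electric field. The plan is to establish monotonicity, under the smooth inverse mean curvature flow starting from $\S_o$, of the modified quantity
\bee
\m_o^{MCH}(\S)=\(\frac{|\S|}{16\pi}\)^{\frac12}\(1+\frac{4\pi Q(\S_o)^2+4\pi Q_B(\S_o)^2}{|\S|}-\frac{1}{16\pi}\int_\S H^2\,dS\),
\eee
in which both charges are frozen at their initial values $Q=Q(\S_o)$ and $Q_B=Q_B(\S_o)$, and then to pass to the weak flow verbatim as before.

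First I would substitute the energy condition \eqref{eq-QDEC4} into the Geroch monotonicity formula; in place of the single source term $2|E|^2$ from Theorem \ref{thm-BHQ} this now yields two nonnegative terms $2|E|^2$ and $2|B|^2$. The electric contribution is handled precisely as in Theorem \ref{thm-BHQ}: Cauchy--Schwarz gives $\int_{\S_t}|E|^2\,dS_t\geq|\S_t|^{-1}\(\int_{\S_t}E\cdot n\,dS_t\)^2$, the divergence theorem rewrites the flux as $4\pi Q+\int_{\Omega_t}\nabla\cdot E\,dV$, and the hypothesis $Q(\S)\nabla\cdot E\geq0$ forces the cross term to have the favourable sign so that the estimate dominates $32\pi^2Q^2/|\S_t|$. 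The essential point is that $\nabla\cdot B=\rho_B$ plays exactly the structural role of $\nabla\cdot E=4\pi\rho$, so the magnetic term is controlled by the identical chain of inequalities, now using the flux identity for $B$ together with the hypothesis $Q_B(\S)\nabla\cdot B\geq0$ to dominate $32\pi^2Q_B^2/|\S_t|$.

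Because these two estimates are independent and additive, together they absorb the time derivatives of both $4\pi Q^2/|\S_t|$ and $4\pi Q_B^2/|\S_t|$, producing the analogue of \eqref{eq-evo} with $Q^2$ replaced throughout by $Q^2+Q_B^2$; combined with $\frac{d}{dt}|\S_t|^{1/2}=\frac12|\S_t|^{1/2}$ this gives the monotonicity of $\m_o^{MCH}(\S_t)$. The passage to the weak flow is unchanged: the area still obeys $|\S_t|=|\S_o|\exp(t)$, the frozen charges are constant, and in the Geroch inequality the extra $Q^2$- and $Q_B^2$-terms each cancel against their own $\int|\S_t|^{-1/2}\,dt$ contribution exactly as the electric terms do in Theorem \ref{thm-BHQ}. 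Flowing out from the horizon then yields \eqref{eq-Hawkingineq-A}, and hence \eqref{eq-chargedpenrose2-A}.

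For rigidity, equality forces every discarded term to vanish identically, so in particular both $\nabla\cdot E\equiv0$ and $\nabla\cdot B\equiv0$. The one point I would treat with care is identifying the rigid solution: once the charge densities vanish, a duality rotation replaces the pair $(E,B)$ by a single electric field whose total charge is $\sqrt{Q^2+Q_B^2}$ while preserving both the metric and $|E|^2+|B|^2$, after which the rigidity statement of Disconzi and Khuri \cite{DK-2015} applies directly to conclude that $\S$ is a round sphere in a (dyonic) Reissner--Nordstr\"om manifold. Since the monotonicity is otherwise a mechanical doubling of the electric argument, this identification of the equality case is really the only step requiring genuine attention.
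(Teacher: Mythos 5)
Your proposal is correct, and its monotonicity and weak-flow portions coincide with the paper's own argument: the paper proves Theorem \ref{thm-BHQ-A} precisely by observing that $\nabla\cdot B$ enters the Geroch monotonicity computation on exactly the same footing as $\nabla\cdot E$, so the two flux estimates are additive and the quantity with both charges frozen at their initial values is monotone under the weak flow; your ``mechanical doubling'' is exactly this. Where you genuinely depart from the paper is the rigidity step. The paper handles equality by asserting (in the remark following the theorem) that the rigidity proof of Disconzi and Khuri \cite{DK-2015}, written for $B\equiv 0$, remains valid verbatim when a divergence-free magnetic field is present (cf. \cite{KWY-2017}); this route requires reopening and checking their argument. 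You instead apply a constant duality rotation, which preserves the metric, divergence-freeness, the decay conditions, and the pointwise density $|E|^2+|B|^2$, and rotates the charge pair $(Q,Q_B)$ to $\(\sqrt{Q^2+Q_B^2},\,0\)$, so that the rigidity statement of \cite{DK-2015} can be quoted as a black box. This is a legitimate and arguably cleaner alternative; the one step you should spell out is the conclusion: the rotated field $B'$ has zero total flux but is not yet known to vanish pointwise, so after \cite{DK-2015} gives that $(M,g,E')$ is Reissner--Nordstr\"om (hence $R(g)=2|E'|^2$), you should invoke \eqref{eq-QDEC4} in its rotated form $R(g)\geq 2\(|E'|^2+|B'|^2\)$ to force $B'\equiv 0$, and only then undo the rotation to exhibit $(E,B)=(E'\cos\theta,\,E'\sin\theta)$ as a dyonic Reissner--Nordstr\"om configuration. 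With that detail added, your argument is complete.
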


	\begin{thm} \label{thm-inftyQ-A}
		Let $(M,g,E,B)$ be a magnetically charged asymptotically flat 3-manifold satisfying \eqref{eq-QDEC4}, let $\S$ be an outermost minimal sphere and denote by $Q$ and $Q_B$ the total electric and magnetic charges respectively, computed at infinity. Assume further that exterior to $\S$ it holds that $Q\nabla\cdot E\leq0$, $Q_B\nabla\cdot B\leq0$ and there are no closed minimal surfaces in $M$ except for $\S$.
		
		Then
		\bee
		\m_{ADM}\geq \(\frac{|\S|}{16\pi}\)^{\frac12}\(1+ \frac{4\pi Q^2+4\pi Q_B^2}{ |\S|}\),
		\eee 
		Furthermore, equality holds if and only if $\S$ is a round sphere in a Reissner--Nordstr\"om manifold, possibly with both electric and magnetic charges.
	\end{thm}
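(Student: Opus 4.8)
The plan is to run the weak inverse mean curvature flow argument of Theorem~\ref{thm-inftyQ} essentially verbatim, carrying the magnetic charge alongside the electric charge at every step. First I would fix a weak solution $\{\S_t\}$ of inverse mean curvature flow with $\S_0=\S$, so that $|\S_t|=|\S_0|\exp(t)$, and consider the flowed quantity
\bee
	\m_\infty^{MCH}(\S_t)=\( \frac{|\S_t|}{16\pi}\)^{\frac12}\( 1+\frac{4\pi Q^2+4\pi Q_B^2}{|\S_t|}-\frac{1}{16\pi}\int_{\S_t} H^2\, dS \),
\eee
in which $Q$ and $Q_B$ are the fixed total electric and magnetic charges at infinity. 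The goal is to show that $\m_\infty^{MCH}(\S_t)$ is monotonically non-decreasing along the weak flow; since $\S$ is minimal (so $H\equiv0$ there) the initial value of this quantity is precisely the right-hand side of the claimed inequality, and since the flow becomes asymptotically round at infinity one has $\m_\infty^{MCH}(\S_t)\to\m_{ADM}$ as $t\to\infty$, so monotonicity delivers the result.

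For the monotonicity the essential ingredient is the Geroch monotonicity formula of Huisken--Ilmanen for $\int_{\S_t}H^2\,dS$ (equation (5.24) of \cite{H-I01}), together with the energy condition \eqref{eq-QDEC4}. The crucial observation is that the electromagnetic contribution $2\int_{\S_t}(|E|^2+|B|^2)\,dS$ splits into an electric piece and a magnetic piece, each of which I would bound exactly as in the proof of Theorem~\ref{thm-inftyQ}. Writing $E_t$ for the region exterior to $\S_t$, the divergence theorem gives $\int_{\S_t}E\cdot n\,dS=4\pi Q-\int_{E_t}\nabla\cdot E\,dV$, and the sign hypothesis $Q\nabla\cdot E\leq0$ forces
\bee
	\(\int_{\S_t}E\cdot n\,dS\)^2\geq 16\pi^2Q^2,
\eee
with the analogous bound $\(\int_{\S_t}B\cdot n\,dS\)^2\geq 16\pi^2Q_B^2$ following from $Q_B\nabla\cdot B\leq0$. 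Inserting these into the Cauchy--Schwarz estimate used previously yields
\bee
	\frac{64\pi^2(Q^2+Q_B^2)}{|\S_t|}-2\int_{\S_t}\(|E|^2+|B|^2\)\,dS\leq \frac{32\pi^2(Q^2+Q_B^2)}{|\S_t|},
\eee
which is exactly what is needed for the extra charge terms to cancel against the $|\S_t|^{-1/2}$ integral in the Geroch formula once one uses $|\S_t|=|\S_0|\exp(t)$; this establishes monotonicity of $\m_\infty^{MCH}$ under the weak flow.

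Because the electric and magnetic terms enter both the energy condition \eqref{eq-QDEC4} and the Hawking mass $\m_H^{MCH}$ with identical sign and weight, they are controlled by two independent and additive flux estimates, and consequently there is no genuine new obstacle beyond the bookkeeping: the proof is structurally identical to that of Theorem~\ref{thm-inftyQ}, with $Q^2$ replaced everywhere by $Q^2+Q_B^2$ and $|E|^2$ by $|E|^2+|B|^2$. The one point I would verify with care is precisely this symmetry, ensuring that the two flux bounds combine without interference. For rigidity, equality in the conclusion forces equality at every step of the monotonicity computation, so all discarded nonnegative terms must vanish; in particular both $\nabla\cdot E\equiv0$ and $\nabla\cdot B\equiv0$, whence one is reduced to the vanishing-charge-density rigidity case invoked in Theorem~\ref{thm-BHQ}, and concludes that $\S$ is a round sphere in a Reissner--Nordstr\"om manifold now carrying both an electric and a magnetic charge.
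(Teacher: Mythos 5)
Your proposal is correct and follows essentially the same route as the paper: the paper's own justification for this theorem is precisely that the weak IMCF argument of Theorems \ref{thm-BHQ} and \ref{thm-inftyQ} carries over verbatim, with the magnetic flux bounded by the same divergence-theorem/Cauchy--Schwarz estimate as the electric one and the two contributions entering additively, which is exactly what you do. Your rigidity discussion also matches the paper's (which notes, as a remark, that the Disconzi--Khuri rigidity argument remains valid when a magnetic field is present).
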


\begin{remark}
	In \cite{DK-2015}, the rigidity statement is given in the case of no magnetic field; however, for the same reasons as above, the proof clearly remains valid if magnetic fields are present (cf. \cite{KWY-2017}).
\end{remark}

\bigskip


\begin{thebibliography}{91}
	
	\bibitem{ACC19} Alaee, A., Cabrera Pacheco, A. J. and Cederbaum, C., 
	{\sl Asymptotically flat extensions with charge}, arXiv preprint arXiv:1903.09014.
	
	\bibitem{ADM} Arnowitt, R., Deser, S. and Misner, C. W., {\sl Coordinate invariance and energy expressions in general relativity}, Phys. Rev., {\bf 122} (1961), no. 3, 997--1006.
	
	\bibitem{Bartnik-86} Bartnik, R., {\sl The mass of an asymptotically flat manifold}, Comm. Pure Appl. Math.  \textbf{39} (1986), no. 5, 661--693.
	
	\bibitem{Bray01} Bray, H. L., 
	{\sl Proof of the Riemannian Penrose inequality using the positive mass theorem}, 
	J. Differential  Geom., {\bf 59} (2001), no. 2, 177--267.
	
	\bibitem{BC} Bray, H. L. and Chru\'sciel, P. T., 
	{\sl The Penrose Inequality}, 
	In: Chru\'sciel P. T., Friedrich H. (eds) The Einstein Equations and the Large Scale Behavior of Gravitational Fields. Birkh\"auser, Basel (2004).
	
	
	\bibitem{Bray-Lee} Bray, H. L. and  Lee, D. A., 
	{\sl On the Riemannian Penrose inequality in dimensions less than eight}, 
	Duke Math. J., {\bf 148} (2009), no. 1, 81--106.
	
	\bibitem{CCMM} {Cabrera Pacheco}, A. J., Cederbaum, C., McCormick, S. and Miao, P.,
	{\sl Asymptotically flat extensions of CMC Bartnik data}, 
	Class. Quantum Grav., {\bf 34} (2017), no. 10, 105001
	
	
	\bibitem{Chrusciel-86}
	Chru\'sciel,  P.,  {\sl Boundary conditions at spatial   infinity from a Hamiltonian point of view}, Topological Properties and Global Structure of Space-Time, Plenum Press, New York, (1986), 49--59.

	\bibitem{CRT}
Chru\'sciel,  P., Reall, H. and Tod, P., {\sl On Israel–Wilson–Perjés black holes}, Class. Quantum Grav., {\bf 23}(7), (2006).

	\bibitem{Gibbons-Hull}
	Gibbons, G. W. and Hull, C. M., {\sl A bogomolny bound for general relativity and solitons in $N=2$ supergravity}, Phys. Lett. B, {\bf 3}(18), (1982), 190--194.

		\bibitem{GHHP}
	Gibbons, G. W., Hawking, S. W., Horowitz, G. T. and Perry, M. J., {\sl Positive mass theorems for black holes}, Comm. Math. Phys., {\bf 88}(3), (1983), 295--308.

	\bibitem{H-I01} Huisken, G. and Ilmanen, T., {\sl The inverse mean curvature flow and the {R}iemannian {P}enrose inequality},
	J. Differential  Geom., {\bf 59} (2001), no. 3, 353--437.

	\bibitem{Dain} Dain, S.{\sl Bekenstein bounds and inequalities between size, charge, angular momentum and energy for bodies}, Phys. Rev. D \textbf{92}(4) (2015).
	
	\bibitem{DGC-Review} Dain, S. and Gabach-Clement, M.E., {\sl Geometrical inequalities bounding angular momentum and charges in General Relativity}, Living Rev. Relativ. (2018) 21: 5.

	\bibitem{DK-2015}
	Disconzi, M. and Khuri, M., {\sl On the Penrose inequality for charged black holes}, Class. Quantum Grav., 
	\textbf{29} (2012),  245019.
	
	\bibitem{Geroch}
	Geroch, R., {\sl Energy extraction}, Ann. New York Acad. Sci., 
	\textbf{224} (1973).
	
	\bibitem{Jang79}
	Jang, P. S., {\sl Note on cosmic censorship}, Phys. Rev. D, 
	\textbf{20}(4) (1979).
	
	\bibitem{JK}
	Jaracz, J. S. and Khuri, M. A. {\sl Bekenstein bounds, Penrose inequalities, and black hole formation}, Phys. Rev. D, 
	\textbf{97}(12) (2018).	
	
	\bibitem{KWY-exts-2015}
	Khuri, M. A., Weinstein, G. and Yamada, S., {\sl Extensions of the charged Riemannian Penrose inequality}, Class. Quantum Grav., 
	\textbf{32} (2015), 035019.
	
		\bibitem{KWY-2017}
	Khuri, M. A., Weinstein, G. and Yamada, S., {\sl Proof of the Riemannian Penrose inequality with charge for multiple black holes}, J. Differential Geom., \textbf{106}(3) (2017).
	
	\bibitem{M-S} Mantoulidis, C.;  Schoen,  R., {\sl On the Bartnik mass of apparent horizons}, Class. Quantum Grav., {\bf 32} (2015), no. 20.
	
	\bibitem{MarsPenroseReview} Mars, M., {\sl Present status of the Penrose inequality}, Class. Quantum Grav., {\bf 26} (2009), no. 19.
	

	\bibitem{Miao02}
	Miao, P., {\sl Positive mass theorem on manifolds admitting corners along a hypersurface}, Adv. Theor. Math. Phys., 
	\textbf{6} (2002), no. 6, 1163--1182.

	
	\bibitem{Penrose}
	Penrose, R., {\sl Gravitational Collapse: The Role of General Relativity}, Riv.Nuovo Cim. \textbf{1} (1969).
	
	\bibitem{Reiris}
	Reiris, M., {\sl On the shape of bodies in General Relativistic regimes}, Gen. Rel. Grav. \textbf{46} (2014).
	
	\bibitem{WY-2005}
	Weinstein, G. and Yamada, S., {\sl On a Penrose inequality with charge}, Comm. Math. Phys., \textbf{257}(3) (2005).
	
\end{thebibliography}
\end{document}